\theoremstyle{plain}
\newtheorem{theorem}{Theorem}
\newtheorem{lemma}{Lemma}
\newtheorem{proposition}{Proposition}
\theoremstyle{definition}
\theoremstyle{remark}
\newtheorem{remark}{Remark}
\DeclareMathOperator{\Det}{Det}
\DeclareMathOperator{\Tr}{Tr}
\DeclareMathOperator{\res}{res}
\newcommand{\pset}[1]{\mathcal{#1}}
\begin{document}
\begin{CJK}{UTF8}{gbsn}

\title{Equilibria and their stability in an asymmetric duopoly model of Kopel}

%\author{Xiaoliang Li\thanks{Corresponding author: xiaoliangbuaa@gmail.com}}
%\affil{School of Digital Economics, Dongguan City University, Dongguan, China}

\author[a]{Xiaoliang Li}

\author[a]{Kongyan Chen\thanks{Corresponding author: chenkongyan@dgcu.edu.cn}}

\affil[a]{School of Digital Economics, Dongguan City University, Dongguan 523419, China}

%\affil[b]{School, University, City, Country}

\date{}
\maketitle

\begin{abstract}
In this paper, we investigate the equilibria and their stability in an asymmetric duopoly model of Kopel by using several tools based on symbolic computations. We explore the possible positions of the equilibria in Kopel's model. We discuss the possibility of the existence of multiple positive equilibria and establish a necessary and sufficient condition for a given number of equilibria to exist. Furthermore, if the two duopolists adopt the best response reactions or homogeneous adaptive expectations, we establish rigorous conditions for the existence of distinct numbers of positive equilibria for the first time.

\vspace{10pt}
\noindent\emph{Keywords: duopoly; Kopel's model; equilibrium; local stability; symbolic computation}
\end{abstract}

\section{Introduction}

Duopoly is an intermediate market between monopoly and perfect competition \cite{guirao_extensions_2010}.
The study of duopolistic competitions lies at the core of the field of industrial organization.  Since about three decades ago, economists have been making efforts to model duopolistic competitions by discrete dynamical systems of form with unimodal reaction functions. Among them, Kopel \cite{kopel_simple_1996} proposed a famous duopoly model by assuming a linear demand function and nonlinear cost functions, which can be described as the following map.
\begin{equation}\label{eq:map-kopel}
\left\{
\begin{split}
&x(t+1)=(1-a)x(t)+ auy(t)(1-y(t)),\\
&y(t+1)=(1-b)y(t)+ bvx(t)(1-x(t)),
\end{split}\right.
\end{equation}
where $0< a,b\leq 1$ and $u,v>0$. In this map, $x(t)$ and $y(t)$ denote the output quantities of the two duopolists, respectively. Moreover, $a$ and $b$ represent the adjustment coefficients of the adaptive expectations of the two firms, while $u$ and $v$ measure the intensity of the positive externality the actions of the player exert on the payoff of its rival. 

Kopel's model is mathematically interesting because it couples two standard logistic maps together. Afterward, a lot of contributions were made to intensive investigations, extensions, and generalizations of the model. For example, Agiza \cite{agiza_analysis_1999} established rigorous conditions for the stability of the equilibria and investigated the bifurcations and chaos of Kopel's model in the special case of $a=b$ and $u=v$. Bischi et al.\ \cite{bischi_multistability_2000} proved the existence of the multistability and cycle attractors in map \eqref{eq:map-kopel}. Bischi and Kopel \cite{bischi_equilibrium_2001} used the method of critical curves to analyze the global bifurcations and illustrate the qualitative changes in the topological structure of the basins of Kopel's model. Chaotic dynamics in map \eqref{eq:map-kopel} seems to be observed through numerical simulations by many researchers. But, this is not rigorous proof. In this regard, Wu et al.\ \cite{wu_complex_2010} gave the rigorous computer-assisted verification for the existence of the chaotic dynamics in Kopel's model by using the topological horseshoe theory. Moreover, C{\'a}novas and Mu{\~n}oz-Guillermo \cite{Canovas2018O} analytically proved the existence of chaos if the firms in the model of Kopel are homogeneous. Furthermore, Zhang and Gao \cite{zhang_equilibrium_2019} extended the model by assuming that each firm could forecast its rival’s output through a straightforward extrapolative foresight technology. Elsadany and Awad \cite{elsadany_dynamical_2016} modified Kopel's game by assuming one player uses a naive expectation whereas the other employs an adaptive technique and applied the feedback control method to control the chaotic behavior. Torcicollo \cite{torcicollo_dynamics_2013} generalized map \eqref{eq:map-kopel} by introducing the self-diffusion and cross-diffusion terms. 

In the rich literature regarding the model of Kopel, there are nearly no analytical discussions on the equilibria and their stability in the asymmetric case of $u\neq v$, which surprises us much. The primary goal of our study is to fill this gap. The major obstacle to the analytical study on the asymmetric case of $u\neq v$ is that the closed-form equilibria can not be obtained explicitly. We employ several tools based on symbolic computations such as the triangular decomposition \cite{wang_elimination_2001} and resultant \cite{Mishra1993A} methods to get around this obstacle. We explore the possibility of the existence of multiple positive equilibria, which has received a lot of attention from economists. We establish a necessary and sufficient condition for a given number of equilibria to exist. If the two duopolists adopt the best response reactions ($a=b=1$) or homogeneous adaptive expectations ($a=b$), we acquire rigorous conditions for the existence of distinct numbers of positive equilibria for the first time. 

%Li et al.\ \cite{li_complex_2022} 
%
%Anderson et al.\ \cite{anderson_basins_2005}
%
%Govaerts and Khoshsiar \cite{Govaerts2008}
%
%Rionero and Torcicollo \cite{rionero_stability_2014}

The rest of this paper is structured as follows. In Section 2, we explore the number of the equilibria in map \eqref{eq:map-kopel} and their possible positions. In Section 3, the local stability of the equilibria is analytically investigated. Concluding remarks are provided in Section 4.

\section{Equilibrium Analysis}

%In the symmetric case of $u=v$, the equilibria have been intensively studied by many contributors such as Agiza \cite{agiza_analysis_1999}. However, the asymmetric case of $u\neq v$ was rarely discussed, which may be because the analytical expressions of the equilibria cannot be explicitly derived. 

To the best of our knowledge, Li et al.\ \cite{li_complex_2022} first explored the number of equilibria in the asymmetric case of $u\neq v$. In this section, we aim at providing a more systematic analysis of this case. By setting $x(t+1)=x(t)=x^*$ and $y(t+1)=y(t)=y^*$ in map \eqref{eq:map-kopel}, we obtain the equilibrium equations
\begin{equation}\label{eq:eq-equi}
\left\{
	\begin{split}
	& x^*=uy^*(1-y^*),\\
	& y^*=vx^*(1-x^*).	
	\end{split}
\right.
\end{equation}

%From an economic point of view, the state set $\{(x,y)\,|\,0\leq x,y\leq 1\}$ or $[0,1]\times [0,1]$ is of our concern. 

Provided that $u,v\leq 4$, it is evident that $[0,1]\times [0,1]$ is a trapping set of map \eqref{eq:map-kopel}, which means that all the trajectories will stay in $[0,1]\times [0,1]$ if the initial belief $(x(0),y(x))$ is taken from $[0,1]\times [0,1]$. In addition, from the following Proposition \ref{prop:all-equi}, one knows that if $uv\geq 1$ (even if $u,v>4$), then the equilibria that the trajectories approach should lie in $[0,1]\times [0,1]$ provided that the initial beliefs are taken from $[0,1]\times [0,1]$. In the proof of Proposition \ref{prop:all-equi}, the triangular decomposition method plays an ambitious role, which extends the Gaussian elimination method to solving polynomial equations. Readers may refer to \cite{wang_elimination_2001,Li2010D} for more information regarding the triangular decomposition. Moreover, the concept of the resultant is needed. The following lemma reveals the main property of the resultant, which can also be found in \cite{Mishra1993A}.

\begin{lemma}\label{lem:res-com}
Let $A$ and $B$ be two univariate polynomials in $x$.
There exist two polynomials $F$ and $G$ in $x$ such that 
$$FA+GB=\res(A,B,x).$$
Furthermore, $A$ and $B$ have common zeros in the field of complex numbers if and only if $\res(A, B, x)=0$.
\end{lemma}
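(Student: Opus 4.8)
\emph{Proof proposal.} The plan is to take the Sylvester-matrix characterization as the definition of the resultant and to derive both assertions from elementary linear algebra over $\mathbb{C}[x]$. Write $A=\sum_{i=0}^{n}a_ix^{i}$ and $B=\sum_{j=0}^{m}b_jx^{j}$ with $n=\deg A$, $m=\deg B$ (so that $a_n,b_m\neq 0$), and let $S$ be the $(n+m)\times(n+m)$ Sylvester matrix whose rows are the coefficient vectors of $x^{m-1}A,\dots,xA,A,x^{n-1}B,\dots,xB,B$ expressed in the monomial basis $x^{n+m-1},\dots,x,1$, so that $\res(A,B,x):=\det S$. For the first assertion I would introduce the column vector $\mathbf v=(x^{n+m-1},\dots,x,1)^{\mathsf T}$, observe the matrix identity $S\mathbf v=(x^{m-1}A,\dots,A,x^{n-1}B,\dots,B)^{\mathsf T}$, and then multiply on the left by the adjugate, using $\operatorname{adj}(S)\,S=(\det S)I$, to obtain $(\det S)\mathbf v=\operatorname{adj}(S)\,(x^{m-1}A,\dots,A,x^{n-1}B,\dots,B)^{\mathsf T}$. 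The last coordinate of this vector identity reads $\res(A,B,x)=FA+GB$, where $F$ is a combination of $1,x,\dots,x^{m-1}$ and $G$ a combination of $1,x,\dots,x^{n-1}$ whose coefficients are the entries of the last row of $\operatorname{adj}(S)$; in particular $F,G\in\mathbb{C}[x]$.

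Given the identity $FA+GB=\res(A,B,x)$, the \emph{only if} direction of the second assertion is immediate: if $A(\alpha)=B(\alpha)=0$ for some $\alpha\in\mathbb{C}$, evaluating at $x=\alpha$ forces $\res(A,B,x)=0$. For the converse, suppose $\res(A,B,x)=\det S=0$. Then the rows of $S$ are linearly dependent over $\mathbb{C}$, and re-reading the coefficients of a nontrivial dependence yields $p,q\in\mathbb{C}[x]$, not both zero, with $\deg p<m$ and $\deg q<n$, satisfying $pA+qB=0$. Since $A,B\neq 0$, neither $p$ nor $q$ can vanish by itself, so $pA=-qB$ with both sides nonzero. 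If $\gcd(A,B)=1$ in $\mathbb{C}[x]$, then $B\mid pA$ would force $B\mid p$ by Euclid's lemma, which is impossible because $0\le\deg p<\deg B$. Hence $\gcd(A,B)$ is nonconstant, and by the fundamental theorem of algebra it has a zero in $\mathbb{C}$, which is then a common zero of $A$ and $B$.

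The adjugate bookkeeping and the degree counts are routine; the only point that will require genuine care is the treatment of degenerate configurations — a vanishing formal leading coefficient, or one of $A,B$ being a nonzero constant — where $S$ and the statement must be interpreted relative to the prescribed formal degrees. In our applications the polynomials entering the resultant have controlled leading coefficients, so I would either assume $a_n,b_m\neq 0$ outright or record the conventions explicitly. A shorter but less self-contained alternative for the converse is to invoke the product formula $\res(A,B,x)=a_n^{\,m}\prod_{A(\alpha)=0}B(\alpha)$, which makes the equivalence transparent but merely defers the work to proving that formula; I prefer the linear-algebra argument above. The statement is classical (see \cite{Mishra1993A}), so this sketch is kept brief.
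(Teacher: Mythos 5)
Your argument is correct, but note that the paper itself offers no proof of this lemma at all: it is stated as a known fact and delegated to the reference \cite{Mishra1993A}, so there is no ``paper proof'' to match. What you give is the classical self-contained argument, and it holds up: the adjugate identity $\operatorname{adj}(S)\,S=(\det S)I$ applied to $S\mathbf v=(x^{m-1}A,\dots,A,x^{n-1}B,\dots,B)^{\mathsf T}$ does produce $FA+GB=\res(A,B,x)$ with $\deg F<m$, $\deg G<n$ (the coefficients of $F,G$ being the last row of the adjugate), the ``only if'' direction follows by evaluation at a common root, and the converse via a nontrivial row dependence $pA+qB=0$ with $\deg p<m$, $\deg q<n$, the degree bound ruling out $\gcd(A,B)=1$, and the fundamental theorem of algebra supplying a common root. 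The one point you rightly flag is the only place where care is genuinely needed: the equivalence is tied to the \emph{formal} degrees used to build $S$, and fails if both leading coefficients are allowed to vanish. This caveat is in fact relevant to how the lemma is used in the paper, where resultants such as $\res(T_1,T_1',x^*)$ have leading coefficients depending on $u,v$ that are specialized afterwards; since $u,v>0$ keeps those leading coefficients nonzero, the applications are safe, but stating the hypothesis $a_n,b_m\neq 0$ explicitly, as you propose, is the right way to make the lemma airtight. Your shorter alternative via the product formula $\res(A,B,x)=a_n^{\,m}\prod_{A(\alpha)=0}B(\alpha)$ would equally serve, at the cost of proving that formula; either route is acceptable, and both are standard.
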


\begin{proposition}\label{prop:all-equi}
	In map \eqref{eq:map-kopel}, all the equilibria remain inside $[0,1]\times [0,1]$ if $uv\geq 1$.
\end{proposition}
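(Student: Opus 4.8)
The plan is to work with the equilibrium equations \eqref{eq:eq-equi} and show that any real solution $(x^*,y^*)$ with either coordinate outside $[0,1]$ forces $uv<1$. First I would record the elementary observations that follow directly from the system: if $x^*=uy^*(1-y^*)$ and $y^*=vx^*(1-x^*)$, then the product gives $x^*y^* = uv\, x^*y^*(1-x^*)(1-y^*)$. The origin $(0,0)$ is always an equilibrium and lies in the square, so assume $(x^*,y^*)\neq(0,0)$; a quick sign analysis of the two equations (using $u,v>0$) shows that in fact both coordinates must be nonzero for any nontrivial equilibrium, so we may cancel $x^*y^*$ to obtain the key relation
\begin{equation}\label{eq:key-rel}
(1-x^*)(1-y^*) = \frac{1}{uv}.
\end{equation}
From \eqref{eq:eq-equi} we also get the sign constraints: $x^*$ and $uy^*(1-y^*)$ have the same sign, and $y^*$ and $vx^*(1-x^*)$ have the same sign. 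Combining these yields that $x^*,y^*,1-x^*,1-y^*$ cannot have arbitrary signs — I would enumerate the cases to pin down that the only way to leave the square is with, say, $x^*>1$ (forcing $y^*<0$, hence $1-y^*>1$) or symmetrically $y^*>1$.

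Next I would turn the sign information into the desired inequality. Suppose for contradiction that an equilibrium has $x^*>1$. Then the first equation $x^*=uy^*(1-y^*)$ needs $y^*(1-y^*)>0$, i.e.\ $0<y^*<1$, which combined with the second equation $y^*=vx^*(1-x^*)<0$ (since $x^*>1$) is a contradiction with $y^*>0$ — so actually $x^*>1$ alone is impossible, and likewise $y^*>1$ is impossible. The remaining escape route is $x^*<0$ or $y^*<0$. If $x^*<0$, the second equation forces $vx^*(1-x^*)<0$, so $y^*<0$ as well; then $1-x^*>1$ and $1-y^*>1$, so $(1-x^*)(1-y^*)>1$, and \eqref{eq:key-rel} gives $1/(uv)>1$, i.e.\ $uv<1$. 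This is exactly the contrapositive of the claim. I would then note the symmetric argument handles the $y^*<0$ case, completing the proof.

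The step that needs the most care — and where the triangular decomposition and resultant machinery mentioned in the excerpt presumably enters in the authors' version — is establishing that there are no \emph{further} exotic real solutions that the crude sign analysis misses, for instance solutions where the cancellation of $x^*y^*$ hides a branch, or confirming that the case division above is genuinely exhaustive once one eliminates one variable. Concretely, eliminating $y^*$ from \eqref{eq:eq-equi} yields a quartic in $x^*$ (with coefficients polynomial in $u,v$), and one wants to argue about the location of its real roots relative to $[0,1]$ as a function of the parameters; Lemma \ref{lem:res-com} is the natural tool for detecting when root-location boundaries (roots at $0$, at $1$, or coincident roots) occur, by computing resultants of the quartic with $x$, $x-1$, and its derivative. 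So the main obstacle is the bookkeeping of this quartic's real-root structure: showing that the sign/parity constraints derived above are not only necessary but capture every real equilibrium, so that the short argument via \eqref{eq:key-rel} is rigorous rather than merely suggestive. If that elimination step is handled cleanly, the rest is the routine sign-chasing sketched here.
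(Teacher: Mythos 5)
Your argument is correct, and it is genuinely different from (and more elementary than) the paper's proof. The paper proceeds by symbolic computation: it triangularly decomposes the equilibrium system into the branch $[y^*,x^*]$ (the origin) and $[T_2,T_1]$ with the cubic $T_1=uv^2x^{*3}-2uv^2x^{*2}+(uv^2+uv)x^*-uv+1$, then uses $\res(T_1,1-x^*,x^*)=1$ and $\res(T_1,x^*,x^*)=uv-1$ to show that zeros of $T_1$ can never cross $x^*=1$ and can only change sign across $uv=1$, fixes the sign by sample points, and treats $uv=1$ separately by solving $T_1$ explicitly. You instead work directly with the system: the only equilibrium with a vanishing coordinate is the origin; for nontrivial equilibria the product of the two equations gives $(1-x^*)(1-y^*)=1/(uv)$; the cases $x^*>1$ and $y^*>1$ are impossible for \emph{all} $u,v>0$ by a two-line sign contradiction; and $x^*<0$ (or $y^*<0$) forces both coordinates negative, hence $(1-x^*)(1-y^*)>1$ and $uv<1$, which is exactly the contrapositive, with $uv=1$ handled uniformly rather than as a separate case. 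Your closing worry about ``exotic branches'' is unfounded: you never eliminate a variable, so no real solution can be lost — the cancellation of $x^*y^*$ is performed only after proving both coordinates nonzero, and the case split ($x^*>1$, $y^*>1$, $x^*<0$, $y^*<0$, or inside the square) is trivially exhaustive; the resultant machinery is not needed for this proposition. What the paper's heavier route buys is not rigor here but reuse: the triangular set $[T_2,T_1]$ and the cubic $T_1$ are the objects on which the whole subsequent equilibrium-counting and stability analysis (Theorem \ref{thm:equi-num} onward) is built, and the proposition doubles as a demonstration of that methodology; your route buys a short, self-contained, human-readable proof of this particular statement.
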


\begin{proof}
By using the triangular decomposition method, we can decompose the solutions of Eq.\ \eqref{eq:eq-equi} into zeros of two triangular sets, i.e., $[y^*,x^*]$ and $[T_2,T_1]$,
where
\begin{align*}
	&T_1=u v^2 x^{*3}- 2\, u v^{2} x^{*2}+\left(u v^{2}+u v\right) x^*-u v+1,~~T_2=y^*+vx^{*2} -vx^*.
\end{align*}
The zero of $[y^*,x^*]$ corresponds to the equilibrium $E_0(0,0)$, which obviously lies in $[0,1]\times [0,1]$. Therefore, we focus on the other branch $[T_2, T_1]$. 

Because of the symmetry of $x^*,y^*$ and $u,v$ in Eq.\ \eqref{eq:eq-equi}, we just need to prove any zero of $T_1$ satisfies $x^*\in [0,1]$ if $uv\geq 1$. One can compute that $\res(T_1,1-x^*,x^*)=1$. According to Lemma \ref{lem:res-com}, we know any zero $x^*$ of $T_1$ can not touch the line $x^*=1$ as $u$ and $v$ vary. Therefore, any zero $x^*$ of $T_1$ is smaller than 1. Furthermore, we have $\res(T_1,x^*,x^*)=uv-1$. This means that the sign of a zero of $T_1$ may change as the parameter point $(u,v)$ goes across the curve $uv=1$. It is easy to check that any zero of $T_1$ is positive if $uv>1$ and vice versa. When $uv=1$, we can plug $v=1/u$ into $T_1$ and solve three complex zeros $0$, $1+\sqrt{-u}$, and $1-\sqrt{-u}$. The only real zero $x^*=0$ is in $[0,1]$, obviously. In conclusion, all zeros of $T_1$ remain inside $[0,1]$ if $uv\geq 1$, which completes the proof.
\end{proof}

From an economic point of view, positive equilibria that satisfy $x^*,y^*>0$ are more important and mainly considered in what follows.

\begin{theorem}\label{thm:equi-num}
Denote $R_1=u^2 v^2-4\,u^2 v-4\,u v^2 +18\,uv -27$. Provided that $R_1\neq 0$, three distinct positive equilibria exist if and only if $R_1>0$, and one unique positive equilibrium exists if and only if $R_1<0$ and $uv>1$. Provided that $R_1=0$, two distinct positive equilibria exist if and only if $(u,v)\neq (3,3)$, and one unique positive equilibrium $\left(\frac{2}{3},\frac{2}{3}\right)$ exists if $(u,v)= (3,3)$.
\end{theorem}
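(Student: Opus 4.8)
The plan is to reduce Theorem \ref{thm:equi-num} to counting the real roots in the open interval $(0,1)$ of the cubic $T_1=uv^2x^{*3}-2uv^2x^{*2}+(uv^2+uv)x^*-uv+1$ produced by the triangular decomposition in the proof of Proposition \ref{prop:all-equi}. On the branch $[T_2,T_1]$ one has $y^*=vx^*(1-x^*)$, so since $v>0$ a root $x^*$ of $T_1$ satisfies $y^*>0$ iff $x^*\in(0,1)$; conversely a positive equilibrium has $x^*>0$ and, from $y^*=vx^*(1-x^*)>0$, also $x^*<1$. Hence positive equilibria correspond bijectively (via the injective map $x^*\mapsto(x^*,vx^*(1-x^*))$) to roots of $T_1$ in $(0,1)$, and the zero equilibrium sits on the discarded branch $[y^*,x^*]$. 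Two evaluations will be used repeatedly: $T_1(1)=1$, so $x^*=1$ is never a root, and $T_1(0)=1-uv$, so $x^*=0$ is a root precisely when $uv=1$.

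Next I would relate $R_1$ to the discriminant of $T_1$ in $x^*$. A direct computation — the natural place for the symbolic tools of the paper, e.g.\ a subresultant evaluation of $\res\bigl(T_1,\partial T_1/\partial x^*,x^*\bigr)$ — gives $\operatorname{disc}_{x^*}(T_1)=u^2v^4R_1$, so $R_1$ and this discriminant share the same sign; hence $R_1>0$, $R_1=0$, $R_1<0$ correspond respectively to three distinct real roots, a repeated real root, and a single real root of $T_1$. I also want the observation that, writing $p=uv$ and $s=u+v$, one has $R_1=p^2+(18-4s)p-27$, which for fixed $p$ is decreasing in $s$; by $s\ge2\sqrt p$ its maximum over $\{uv=p\}$ equals $(\sqrt p-3)^3(\sqrt p+1)$, attained at $u=v=\sqrt p$. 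Since $\sqrt p+1>0$, this forces $R_1\ge0\Rightarrow uv\ge9$; in particular $R_1\ge0$ implies $uv>1$, so Proposition \ref{prop:all-equi} applies whenever $R_1\ge0$ and then every real root of $T_1$ lies in $[0,1]$, hence in $(0,1)$ by the endpoint values above.

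The case analysis is then routine. If $R_1>0$, the three distinct real roots all lie in $(0,1)$, giving three distinct positive equilibria; conversely, three positive equilibria force three distinct real roots of $T_1$, hence $\operatorname{disc}_{x^*}(T_1)>0$ and $R_1>0$. If $R_1<0$, $T_1$ has a unique real root $\rho$, and since the leading coefficient $uv^2>0$ one has $\operatorname{sign}T_1(x)=\operatorname{sign}(x-\rho)$; when $uv>1$ Proposition \ref{prop:all-equi} puts $\rho\in(0,1)$, giving exactly one positive equilibrium, while when $uv\le1$ we have $T_1(0)=1-uv\ge0$, forcing $\rho\le0$, so there is no positive equilibrium. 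Together with the hypothesis $R_1\ne0$ (which rules out the $R_1>0$ alternative), this is the claimed equivalence for one equilibrium. Finally, if $R_1=0$ then $uv\ge9>1$, so all real roots of $T_1$ lie in $(0,1)$; a vanishing discriminant means either a double root together with a distinct simple root — two distinct positive equilibria — or a triple root. Since $T_1''=2uv^2(3x^*-2)$, the only candidate triple root is $x^*=\tfrac23$, and substituting $x^*=\tfrac23$ into $T_1'=0$ forces $v=3$, then into $T_1=0$ forces $u=3$; so the triple root occurs exactly at $(u,v)=(3,3)$, where the unique equilibrium is $\bigl(\tfrac23,\tfrac23\bigr)$, and every other point of the curve $R_1=0$ yields two distinct positive equilibria.

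I expect the main obstacle to be the discriminant identity $\operatorname{disc}_{x^*}(T_1)=u^2v^4R_1$: carrying it out by hand is tedious (though immediate with a computer algebra system), and getting the overall factor together with its positivity right is exactly what makes the three sign regimes of the discriminant line up with those of $R_1$. A secondary point requiring care is the boundary bookkeeping — certifying that the single real root lies outside $(0,1)$ when $uv\le1$, and, when $R_1=0$, cleanly separating the genuinely degenerate parameter point $(3,3)$ from the remainder of the curve.
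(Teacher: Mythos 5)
Your proposal is correct, and it reaches the result by a noticeably different route than the paper, although both share the same skeleton: reduce the problem to the real roots of the cubic $T_1$ coming from the triangular decomposition, with $R_1$ governing the root configuration. The paper works in the symbolic-computation style it uses throughout: it computes $\res(T_1,T_1',x^*)=-u^3v^6R_1$, argues that the root configuration cannot change inside a connected parameter region where $R_1\neq 0$, determines it at sample points $(4,4)$ and $(2,2)$, and isolates the triple-root locus via $\res(T_1,T_1'',x^*)$, quoting Proposition \ref{prop:all-equi} for positivity. You instead invoke the classical discriminant sign test for cubics (your identity $\operatorname{disc}_{x^*}(T_1)=u^2v^4R_1$ is consistent with the paper's resultant, since $\operatorname{disc}=-\res(T_1,T_1')/(uv^2)$, and it checks out), pin the roots inside $(0,1)$ using the endpoint values $T_1(0)=1-uv$, $T_1(1)=1$ together with Proposition \ref{prop:all-equi}, and locate the triple root by substituting $x^*=2/3$ from $T_1''=0$. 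Two of your additions are genuine improvements in rigor over the printed proof: the AM--GM estimate showing $R_1\ge 0\Rightarrow uv\ge 9$ (the paper never verifies that $R_1>0$ forces $uv>1$, which it tacitly needs to conclude the three real roots are positive), and the explicit treatment of the converse directions, including the exclusion of positive equilibria when $R_1<0$ and $uv\le 1$. What the paper's method buys is uniformity — the same resultant-plus-sample-point machinery is reused for the stability analysis in Section 3 — while your argument is more elementary, hand-checkable, and closes the small gaps noted above; the only computational debt you carry is the discriminant identity, which is of the same nature as the paper's resultant computation.
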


\begin{proof}
One can see that $T_2$ is linear with respect to $y^*$, which means that the number of equilibria corresponding to $[T_2, T_1]$ equals that of distinct real zeros of $T_1$. Consequently, we focus on analyzing the real zeros of $T_1$ in the sequel. 

For a univariate polynomial $F(x)$, the multiplicity of a zero $\bar x$ is greater than 1 if $F(\bar x)=0$ and $F'(\bar x)=0$ are simultaneously satisfied, where $F'$ is the derivative of $F$. We have $\res(T_1,T'_1,x^*)=-u^{3} v^{6} R_1$.
Therefore, the nature (e.g., the multiplicity and number) of real zeros of $T_1$ should not change if the parameter point $(u,v)$ is not across the curve $R_1=0$ as the values of $u,v$ vary. As shown in Figure \ref{fig:num-equilibria}, the curve $R_1=0$ (the red curve) divides the parameter set of concern $\{(u,v)\,|\,u,v>0\}$ into two regions. In the dark-gray region defined by $R_1>0$, the nature of the zeros can be determined by checking at some sample point, say $(4,4)$ for example. At $(4,4)$, $T_1=64\, x^{*3}-128\, x^{*2}+80\, x^*-15$, which can be solved by $\frac{3}{4},\frac{5}{8}-\frac{\sqrt{5}}{8},\frac{5}{8}+\frac{\sqrt{5}}{8}$. Hence, three distinct real zeros of $T_1$ exist if $R_1>0$. This implies that three distinct positive equilibria exist in map \eqref{eq:map-kopel} if $R_1>0$.

In the other region defined by $R_1<0$, we similarly select a sample point, e.g., $(2,2)$. We have $T_1=8\, x^{*3}-16\, x^{*2}+12\, x^*-3$, which can be solved by $\frac{1}{2},\frac{3}{4}-\frac{\mathrm{I} \sqrt{3}}{4},\frac{3}{4}+\frac{\mathrm{I} \sqrt{3}}{4}$. Thus, only one real solution of $T_1$ exists if $R_1<0$. However, in the proof of Proposition \ref{prop:all-equi}, it has been proved that the zero of $T_1$ is positive if $uv>1$. Therefore, one unique positive equilibrium exists if $R_1<0$ and $uv>1$.

Furthermore, on the curve $R_1=0$, we have $T_1=T'_1=0$ according to Lemma \ref{lem:res-com}. Then, the multiplicity of some zeros of $T_1$ should be greater than 1, which means that some zeros of $T_1$ become the same as the parameter point $(u,v)$ varies from the region $R_1>0$ to its edge $R_1=0$. The degree of $T_1$ with respect to $x^*$ is 3, thus the multiplicity of a zero of $T_1$ can be taken as 3 at maximum. In this case, $T_1=0$, $T_1'=0$, and $T''_1=0$ should be fulfilled simultaneously. One can compute $\res(T_1,T_1'',x^*)=-8\, u^{3} v^{6} \left(2\, u v^{2}-9\, u v+27\right)$. 
By solving $2\, u v^{2}-9\, u v+27=0$ and $R_1=0$, we obtain $(u,v)=(3,3)$, where $T_1$ has one unique zero  $x^*=2/3$ with multiplicity 3. Thus, one unique positive equilibrium $\left(\frac{2}{3},\frac{2}{3}\right)$ exists if $(u,v)= (3,3)$. But, if $(u,v)\neq (3,3)$ on the curve $uv=0$, then $T_1=0$, $T_1'=0$, and $T''_1\neq 0$. In this case, there are three real zeros of $T_1$, among which two are equal. This means that there are two distinct positive equilibria. The proof is complete.
\end{proof}

We summarize the above results in Figure \ref{fig:num-equilibria}. The regions where three and one unique positive equilibria exist are marked in dark-gray and light-gray, respectively.

\begin{figure}[htbp]
    \centering
    \includegraphics[width=0.4\textwidth]{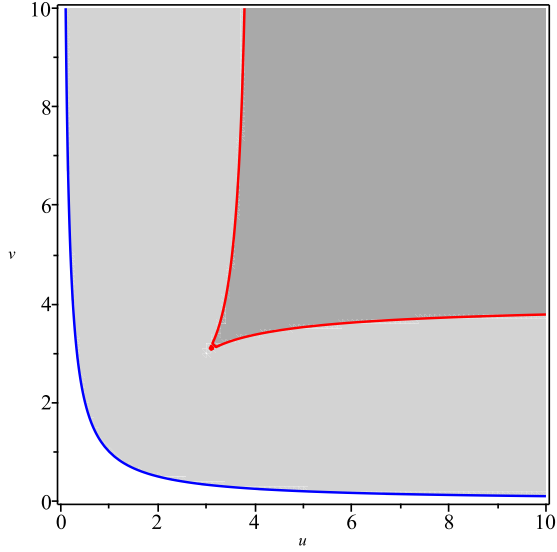}
    \caption{Partitions of the parameter set $\{(u,v)\,|\,u,v>0\}$ of map \eqref{eq:map-kopel} for distinct numbers of equilibria. The blue and red curves are defined by $uv=1$ and $R_1=0$, respectively. In the dark-gray and light-gray regions, there are three and one positive equilibria, respectively.}
    \label{fig:num-equilibria}
\end{figure}

\section{Local Stability}

The Jacobian matrix of map \eqref{eq:map-kopel} at the equilibrium $(x^*,y^*)$ is 

\begin{equation}
J=\left[
\begin{matrix}
1-a &  u a (1-2\,y^*)
\\
 v b (1-2\,x^*) & 1-b 
\end{matrix}
\right].
\end{equation}
Then, the characteristic polynomial of $J$ becomes
$$CP(\lambda)=\lambda^2-\Tr(J)\lambda+\Det(J),$$
where
$$\Tr(J)=2-a-b,~~\Det(J)=(1-a)(1-b)-uvab(1-2\,x^*)(1-2\,y^*),$$ 
are the trace and the determinant of $J$, respectively.
According to the Jury criterion \cite{Jury1976I}, the conditions for the local stability include:
\begin{enumerate}
	\item $CD_1^J\equiv CP(1)= 1-\Tr(J)+\Det(J)>0$,
	\item $CD_2^J\equiv CP(-1)= 1+\Tr(J)+\Det(J)>0$,
	\item $CD_3^J\equiv 1-\Det(J)>0$.
\end{enumerate}

For the zero equilibrium $E_0(0,0)$, we have
\begin{align*}
	&CD_1^J(E_0)=a b (1-u v),\\
	&CD_2^J(E_0)= a b ( 1-u v )-2 a-2 b+4,\\
	&CD_3^J(E_0)= a b (u v-1)+a+b.
\end{align*}
One can easily derive that $E_0$ is locally stable if $1-\frac{a+b}{ab}<uv<1$. In what follows, we focus on positive equilibria. 

It is known that the discrete dynamic system may undergo a local bifurcation when the equilibrium loses its stability at $CD_1^J=0$, $CD_2^J=0$, or $CD_3^J=0$. Denote $\pset{T}=[T_2,T_1]$. Firstly, we consider the possibility of $CD_1^J=0$ by computing
$$\res(CD_1^J,\pset{T})\equiv \res(\res(CD_1^J,T_2,y^*),T_1,x^*).$$
It is obtained that 
$$\res(CD_1^J,\pset{T})=-a^{3} b^{3} u^{3} v^{6} \left(u v-1\right) R_1.$$
According to Lemma \ref{lem:res-com}, $CD_1^J$ and $\pset{T}=0$ imply $\res(CD_1^J,\pset{T})=0$. Hence, at a positive equilibrium of map \eqref{eq:map-kopel}, a necessary condition for $CD_1^J=0$ is $uv=1$ or $R_1=0$. Similarly, regarding the possibility of $CD_2^J=0$ and $CD_3^J=0$, we compute
$$\res(CD_2^J,\pset{T})=-u^3 v^6 R_2,~~\res(CD_3^J,\pset{T})=u^3 v^6 R_3,$$
where
\begin{align*}
	\begin{autobreak}
R_2=
a^{3} b^{3} u^{3} v^{3} 
- 4\,a^{3} b^{3} u^{3} v^{2} 
- 4\,a^{3} b^{3} u^{2} v^{3} 
+ 17\,a^{3} b^{3} u^{2} v^{2} 
+ 4\,a^{3} b^{3} u^{2} v 
+ 4\,a^{3} b^{3} u v^{2} 
- 2\,a^{3} b^{2} u^{2} v^{2} 
- 2\,a^{2} b^{3} u^{2} v^{2} 
- 45\,a^{3} b^{3} u v 
+ 8\,a^{3} b^{2} u^{2} v 
+ 8\,a^{3} b^{2} u v^{2} 
+ 8\,a^{2} b^{3} u^{2} v 
+ 8\,a^{2} b^{3} u v^{2} 
+ 4\,a^{2} b^{2} u^{2} v^{2} 
- 36\,a^{3} b^{2} u v 
- 36\,a^{2} b^{3} u v 
- 16\,a^{2} b^{2} u^{2} v 
- 16\,a^{2} b^{2} u v^{2} 
+ 27\,a^{3} b^{3} 
- 4\,a^{3} b u v 
+ 64\,a^{2} b^{2} u v 
- 4\,a b^{3} u v 
+ 54\,a^{3} b^{2} 
+ 54\,a^{2} b^{3} 
+ 16\,a^{2} b u v 
+ 16\,a b^{2} u v 
+ 36\,a^{3} b 
- 36\,a^{2} b^{2} 
+ 36\,a b^{3} 
- 16\,a b u v 
+ 8\,a^{3} 
- 120\,a^{2} b 
- 120\,a b^{2} 
+ 8\,b^{3} 
- 48\,a^{2} 
+ 48\,a b 
- 48\,b^{2} 
+ 96\,a 
+ 96\,b 
- 64,
\end{autobreak}\\

\begin{autobreak}
R_3=
a^{3} b^{3} u^{3} v^{3} 
- 4\,a^{3} b^{3} u^{3} v^{2} 
- 4\,a^{3} b^{3} u^{2} v^{3} 
+ 17\,a^{3} b^{3} u^{2} v^{2} 
+ 4\,a^{3} b^{3} u^{2} v 
+ 4\,a^{3} b^{3} u v^{2} 
- a^{3} b^{2} u^{2} v^{2} 
- a^{2} b^{3} u^{2} v^{2} 
- 45\,a^{3} b^{3} u v 
+ 4\,a^{3} b^{2} u^{2} v 
+ 4\,a^{3} b^{2} u v^{2} 
+ 4\,a^{2} b^{3} u^{2} v 
+ 4\,a^{2} b^{3} u v^{2} 
- 18\,a^{3} b^{2} u v 
- 18\,a^{2} b^{3} u v 
+ 27\,a^{3} b^{3} 
- a^{3} b u v 
- 2\,a^{2} b^{2} u v 
- a b^{3} u v 
+ 27\,a^{3} b^{2} 
+ 27\,a^{2} b^{3} 
+ 9\,a^{3} b 
+ 18\,a^{2} b^{2} 
+ 9\,a b^{3} 
+ a^{3} 
+ 3\,a^{2} b 
+ 3\,a b^{2} 
+ b^{3}.
\end{autobreak}
\end{align*}

Therefore, local bifurcations of the equilibria may take place when $uv=1$ or $R_i=0$, $i=1,2,3$. However, we should mention that the signs of $CD_i^J$ and $\res(CD_i^J,\pset{T})$ may not be the same for a given $i$. Even if the signs of $\res(CD_i^J,\pset{T})$ are fixed, the signs of $CD_i^J$ may change. For example, if keeping $a=b=1$, we have $\res(R_1,\pset{T})<0$, $\res(R_3,\pset{T})<0$ and $\res(R_3,\pset{T})>0$ when $u=v=4$ or $u=v=\frac{13}{4}$. However, when $u=v=4$, there are three equilibria, i.e., $\left(\frac{3}{4},\frac{3}{4}\right)$, $\left(\frac{5}{8}-\frac{\sqrt{5}}{8},\frac{5}{8}+\frac{\sqrt{5}}{8}\right)$, and $\left(\frac{5}{8}+\frac{\sqrt{5}}{8},\frac{5}{8}-\frac{\sqrt{5}}{8}\right)$, where the signs of $CD_3^J$ are $+$, $-$, and $-$, respectively. Whereas, when $u=v=\frac{13}{4}$, there are three equilibria $\left(\frac{9}{13},\frac{9}{13}\right)$, $\left(\frac{17}{26}-\frac{\sqrt{17}}{26},\frac{17}{26}+\frac{\sqrt{17}}{26}\right)$, and $\left(\frac{17}{26}+\frac{\sqrt{17}}{26},\frac{17}{26}-\frac{\sqrt{17}}{26}\right)$, where the signs of $CD_3^J$ are $+$, $+$, and $+$, respectively.

\begin{remark}\label{rem:idea}
However, it can be derived that if we vary the parameter point continuously without going across the algebraic variety $\res(CD_i^J,\pset{T})=0$ in the parameter space, then $CD_i^J$ will not annihilate (become zero) and its sign will keep fixed. This means that in each region divided by $\res(CD_i^J,\pset{T})=0$, $i=1,2,3$, the signs of $CD_i$, $i=1,2,3$, will not change and we can determine the number of real zeros satisfying $CD_i>0$, $i=1,2,3$, by just selecting some sample point. 	
\end{remark}

Firstly, we consider the case of $a=b=1$. This case is much simpler but is of tremendous economic interest itself because the best response reaction functions are actually adopted by the two duopolists. We denote 
\begin{align*}
	&S_2\equiv R_2|_{a=b=1}=\left(u v-1\right) R_1,\\
	&S_3\equiv R_3|_{a=b=1}=u^{3} v^{3}-4\, u^{3} v^{2}-4\, u^{2} v^{3}+15\, u^{2} v^{2}+12\, u^{2} v+12\, u \,v^{2}-85\, u v+125.
\end{align*}
According to Remark \ref{rem:idea}, we acquire the regions for distinct numbers of stable positive equilibria, which are depicted in Figure \ref{fig:num-stable-eq}. We omit the calculation details due to space limitations. In Figure \ref{fig:num-stable-eq}, the region of two stable positive equilibria is marked in yellow, while that of one stable positive equilibrium is marked in light-gray. Indeed, it may be quite complicated to algebraically describe a region bounded by algebraic varieties. For example, the yellow region can not be described by inequalities of $uv-1$, $R_1$, and $S_3$ (their signs at the yellow region are the same to those at the white region where $(4,4)$ lies). However, this region can be described algebraically by introducing additional polynomials, which can be found in the so-called generalized discriminant list. Readers may refer to \cite{Li2014C, Yang2001A} for more information.

We formally summarize the obtained results in Theorem \ref{thm:stable-equiv}, where $A_1$ and $A_2$ are additional polynomials picked out from the generalized discriminant list. It should be mentioned the computations of searching for these polynomials are quite expensive although this process works perfectly in theory. 

\begin{figure}[htbp]
  \centering
  \subfloat[$(u,v)\in (0,10)\times(1,10)$]{\includegraphics[width=0.4\textwidth]{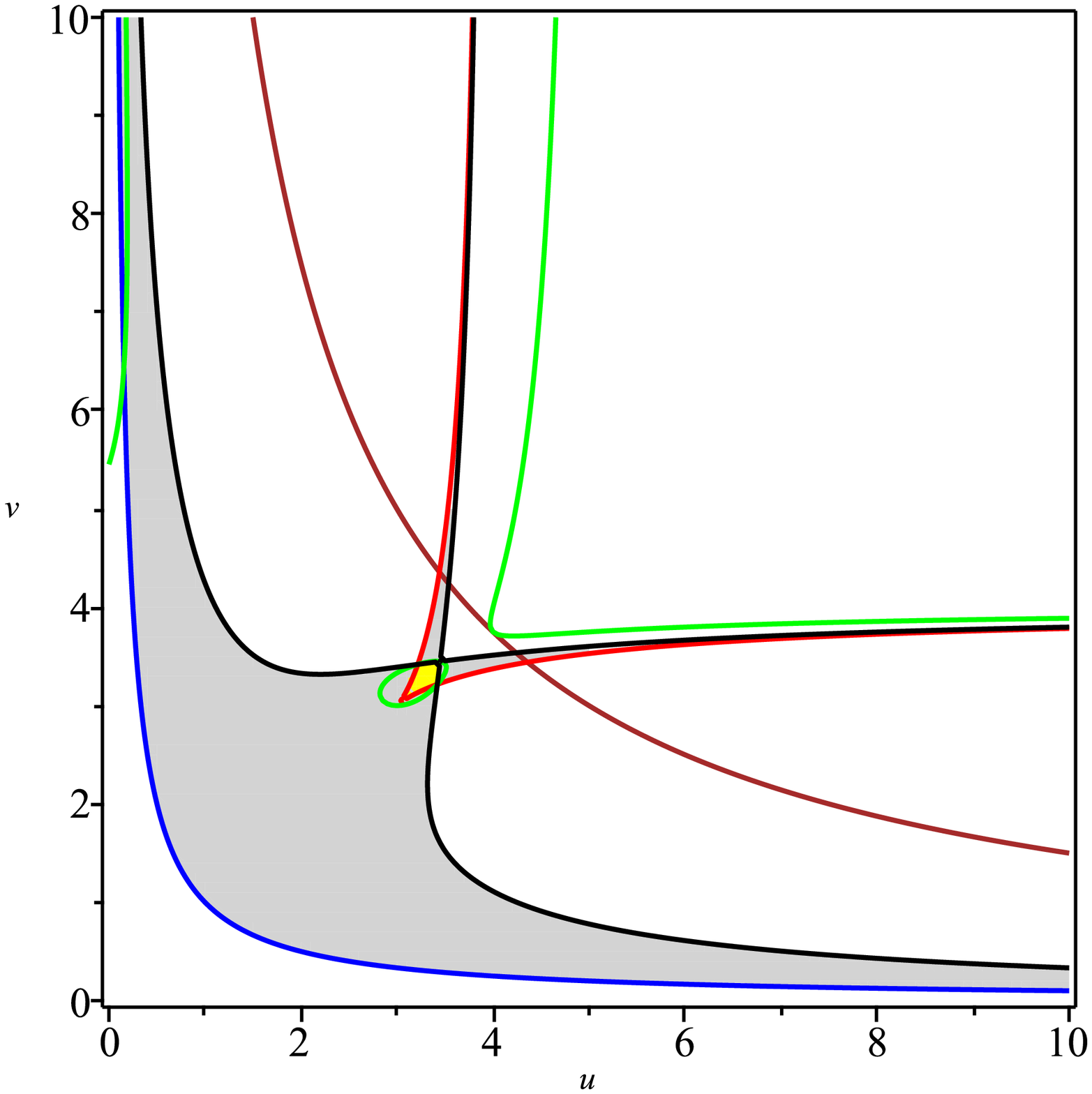}} 
  \subfloat[$(u,v)\in (2.5,5)\times(2.5,5)$]{\includegraphics[width=0.4\textwidth]{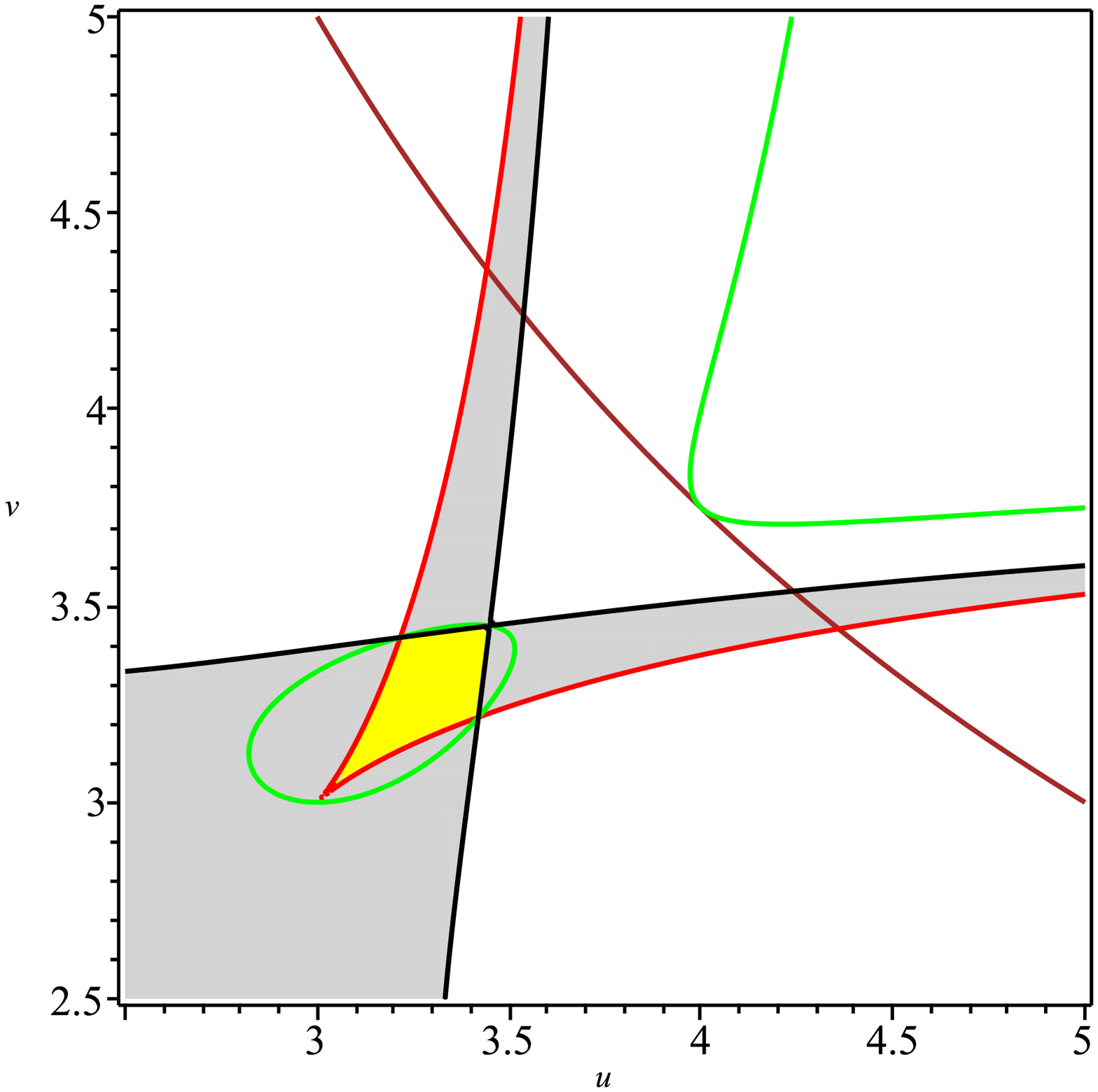}} \\

  \caption{Partitions of the parameter set $\{(u,v)\,|\,u,v>0\}$ in the case of $a=b=1$ for distinct numbers of stable positive equilibria. The blue, red, black, brown, and green curves are defined by $uv=1$, $R_1=0$, $S_3=0$, $A_1=0$, and $A_2=0$, respectively. In the yellow and light-gray regions, there are two and one stable positive equilibria, respectively.}
\label{fig:num-stable-eq}
\end{figure}

%\begin{figure}[htbp]
%    \centering
%    \includegraphics[width=10cm]{fig/num-stable-eq.png}
%    \caption{Partitions of the parameter space for distinct numbers of equilibria}
%    \label{fig:num-stable-eq}
%\end{figure}
%
%\begin{figure}[htbp]
%    \centering
%    \includegraphics[width=10cm]{fig/num-stable-eq-enlarged.png}
%    \caption{Partitions of the parameter space for distinct numbers of equilibria}
%    \label{fig:num-stable-eq-enlarged}
%\end{figure}

\begin{theorem}\label{thm:stable-equiv}
	In the case of $a=b=1$, two stable positive equilibria exist if $R_1>0$, $S_3>0$, $A_1<0$, and $A_2>0$, where
\begin{align*}
&A_1=uv-15,~~A_2=u^2v^2-4\,u^2 v-5\, uv^2+21\,uv+11\,v-60.
\end{align*}
Furthermore, one unique stable positive equilibrium exists if $uv>1$, $R_1<0$, $S_3>0$ or $uv>1$, $R_1>0$, $S_3<0$. 
\end{theorem}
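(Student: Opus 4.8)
\emph{Proof strategy.} The plan is to exploit the drastic simplification that occurs at $a=b=1$ and then run the sample‑point/continuity scheme of Remark \ref{rem:idea}. First I would specialise the Jury conditions: since $\Tr(J)=2-a-b=0$ when $a=b=1$, one has $CD_1^J=CD_2^J=1+\Det(J)$ and $CD_3^J=1-\Det(J)$, so a positive equilibrium is locally stable precisely when $CD_1^J>0$ and $CD_3^J>0$, i.e.\ $|\Det(J)|<1$; this collapses the three Jury inequalities to two. Next I would pin down the curves in the $(u,v)$‑plane across which the number of stable positive equilibria can jump: by Lemma \ref{lem:res-com} a genuine equilibrium can satisfy $CD_1^J=0$ only where $\res(CD_1^J,\pset{T})=0$, which at $a=b=1$ is $\{uv=1\}\cup\{R_1=0\}$, and $CD_3^J=0$ only where $S_3=0$, while by Theorem \ref{thm:equi-num} the count of positive equilibria changes only across $\{R_1=0\}\cup\{uv=1\}$. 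Hence, arguing as in Remark \ref{rem:idea}, on any connected piece of $\{u,v>0\}$ avoiding $\{uv=1\}\cup\{R_1=0\}\cup\{S_3=0\}$ the equilibria vary continuously without collisions and none of them crosses $CD_1^J=0$ or $CD_3^J=0$, so the number of stable positive equilibria is locally constant there and can be read off at one sample point per region.

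For the ``one unique stable positive equilibrium'' assertions I would argue directly. If $uv>1$ and $R_1<0$, Theorem \ref{thm:equi-num} gives a single positive equilibrium; viewing $CD_1^J$, after eliminating $y^*$ through $T_2$, as a cubic in $x^*$, the identity $\res(CD_1^J,\pset{T})=-u^3v^6(uv-1)R_1$ equals $u^3v^6$ times the product of that cubic over the three zeros of $T_1$, and the two non‑real zeros contribute a positive factor, so the sign of $CD_1^J$ at the real equilibrium equals that of $-(uv-1)R_1>0$; likewise $\res(CD_3^J,\pset{T})=u^3v^6S_3>0$ forces $CD_3^J>0$ there, so the equilibrium is stable and it is the only one. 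The case $uv>1$, $R_1>0$, $S_3<0$ (three positive equilibria) I would settle by the local‑constancy argument plus a sample point, after checking that this region is not cut by the critical curves into cells with differing counts (since the statement is only a sufficient condition, this is all that is needed).

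The genuinely hard part is ``two stable positive equilibria''. When $R_1>0$ (which forces $uv>1$) there are three positive equilibria, and the signs $\res(CD_1^J,\pset{T})=-u^3v^6(uv-1)R_1<0$ and $\res(CD_3^J,\pset{T})=u^3v^6S_3>0$ only fix the \emph{parities} of the numbers of equilibria violating $CD_1^J>0$ and $CD_3^J>0$, which leaves the count of stable equilibria ambiguous among $0,1,2$. Indeed the sample point $(4,4)$ has the same signs of $uv-1,R_1,S_3$ as the target region yet a different count, so the region where the count equals $2$ is not a union of sign cells of $\{uv-1,R_1,S_3\}$. To separate these cells one must refine the partition by adjoining auxiliary curves drawn from the generalized discriminant list of the system $\{T_1,\,CD_1^J,\,CD_3^J\}$ (cf.\ \cite{Li2014C,Yang2001A}); the polynomials $A_1$ and $A_2$ are precisely such refining factors, and $\{R_1>0,S_3>0,A_1<0,A_2>0\}$ lies inside one refined cell. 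Granting the (purely computational, CAD‑type) statement that on each cell of the refined partition the number of stable positive equilibria is constant, the theorem follows by evaluating at a single sample point inside that cell.

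I expect the main obstacle to be exactly the construction and verification of this generalized discriminant list — certifying that no refined cell straddles a jump in the count — which is the step the paper itself flags as computationally expensive; by contrast, the reduction of the Jury conditions at $a=b=1$, the resultant sign bookkeeping for $CD_1^J$ and $CD_3^J$, and the finitely many sample‑point evaluations are all routine.
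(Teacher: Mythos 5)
Your proposal follows essentially the same route as the paper, which itself omits the computational details: reduce the Jury conditions at $a=b=1$, take the resultant varieties $uv=1$, $R_1=0$, $S_3=0$ as the only places where the count of stable positive equilibria can change (Remark \ref{rem:idea}), observe—exactly as the paper does via the sample point $(4,4)$—that these signs alone cannot isolate the two-stable-equilibria region, adjoin $A_1,A_2$ from the generalized discriminant list, and finish by sample-point checks in the (refined) cells. Your product-over-conjugate-roots argument pinning the signs of $CD_1^J$ and $CD_3^J$ at the unique equilibrium when $R_1<0$ is a nice explicit supplement, but otherwise the approach and its acknowledged computational burden coincide with the paper's.
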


In the more general case of $a=b$, homogeneous adaptive expectations are employed by the two duopolists. Similar calculations can be conducted. Readers can understand that the algebraic description of the region where two stable positive equilibria exist can not be computed in a reasonably short time according to our implementations of the methods. However, we can geometrically describe the region of two stable positive equilibria: it is bounded by $R_1=0$ (the red surface), $H_3=0$ (the blue surface), $a=0$, and $a=1$ (See Figure \ref{fig:3d-2stable}). For the region where one stable positive equilibrium exists, our computations can yield the results, which are reported in Theorem \ref{thm:ab-one}.

\begin{figure}[htbp]
    \centering
    \includegraphics[width=10cm]{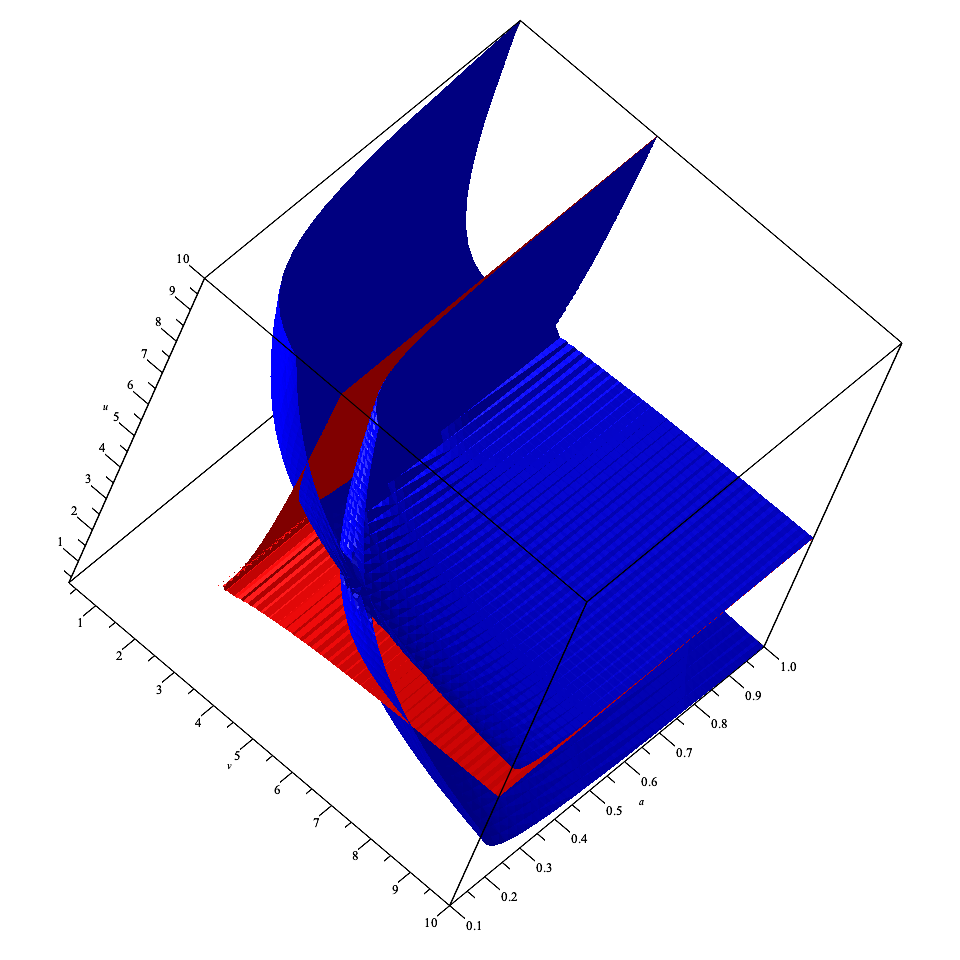}
    \caption{The parametric set $\{(u,v,a)\,|\,u,v>0,0<a\leq 1\}$ of map \eqref{eq:map-kopel} in the case of $a=b$. The surfaces of $R_1=0$ and $H_3=0$ are colored in red and blue, respectively. The region where two stable positive equilibria exist is bounded by the surfaces $R_1=0$, $H_3=0$, $a=0$, and $a=1$.}
    \label{fig:3d-2stable}
\end{figure}

\begin{theorem}\label{thm:ab-one}
In the case of $a=b$, one unique stable positive equilibrium exists if $uv>1$, $R_1<0$, $H_3>0$ or $uv>1$, $R_1>0$, $H_3<0$, where $H_3\equiv R_3|_{b=a}$, i.e.,
\begin{align*}
\begin{autobreak}
H_3=
a^{3} u^{3} v^{3}
-4\,a^{3} u^{3} v^{2}
-4\,a^{3} u^{2} v^{3}
+17\,a^{3} u^{2} v^{2}
+4\,a^{3} u^{2} v
+4\,a^{3} u v^{2}
-2\,a^{2} u^{2} v^{2}
-45\,a^{3} u v
+8\,a^{2} u^{2} v
+8\,a^{2} u v^{2}
-36\,a^{2} u v
+27\,a^{3}
-4\,a u v
+54\,a^{2}
+36\,a
+8.
\end{autobreak}
\end{align*}
	
\end{theorem}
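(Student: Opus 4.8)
The plan is to run, for the slice $b=a$, the same resultant-and-sample-point machinery that underlies Theorem~\ref{thm:stable-equiv}. Since $T_2$ is linear in $y^*$, every positive equilibrium of map~\eqref{eq:map-kopel} corresponds to exactly one real zero $x^*$ of $T_1$, and such an equilibrium is locally stable precisely when $CD_1^J>0$, $CD_2^J>0$ and $CD_3^J>0$ all hold at it; so the quantity to control is the number of real zeros of $T_1$ satisfying all three inequalities. Putting $b=a$ gives $\Tr(J)=2-2a$ and, writing $P=(1-2x^*)(1-2y^*)$,
\begin{align*}
&CD_1^J|_{b=a}=a^{2}\bigl(1-uv\,P\bigr),\\
&CD_2^J|_{b=a}=(2-a)^{2}-uv\,a^{2}P,\\
&CD_3^J|_{b=a}=a(2-a)+uv\,a^{2}P.
\end{align*}
A first reduction simplifies the bookkeeping: for $0<a\le1$ one has $(2-a)^{2}/a^{2}=(2/a-1)^{2}\ge1$, so $CD_1^J|_{b=a}>0$ (equivalently $uv\,P<1$) forces $uv\,P<(2/a-1)^{2}$, i.e.\ $CD_2^J|_{b=a}>0$. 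Hence $CD_2^J$ never needs to be checked, and in this slice local stability of a positive equilibrium is equivalent to $CD_1^J>0$ and $CD_3^J>0$, which is exactly why only $R_1$ (together with $uv$) and $H_3$ can enter the final conditions.

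Next I would use the resultants displayed before the theorem. They specialise to $\res(CD_1^J,\pset{T})|_{b=a}=-a^{6}u^{3}v^{6}(uv-1)R_1$ and $\res(CD_3^J,\pset{T})|_{b=a}=u^{3}v^{6}H_3$, so by Lemma~\ref{lem:res-com}, inside the block $\Omega=\{(u,v,a):u,v>0,\ 0<a\le1\}$ the only loci where $CD_1^J$ or $CD_3^J$ can vanish at a common zero of $\pset{T}$ are the surfaces $uv=1$, $R_1=0$ and $H_3=0$. Invoking Remark~\ref{rem:idea}: along any path in $\Omega$ avoiding these three surfaces, no $CD_i^J$ vanishes at a zero of $T_1$; moreover the number of real zeros of $T_1$ changes only across $R_1=0$, and a zero can leave the ray $x^*>0$ only across $uv=1$ (and never reaches $x^*=1$ since $\res(T_1,1-x^*,x^*)=1$). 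Consequently, for each equilibrium the signs of $CD_1^J$ and $CD_3^J$ — hence its stability — are constant on every connected component of $\Omega\setminus(\{uv=1\}\cup\{R_1=0\}\cup\{H_3=0\})$, and therefore so is the number of stable positive equilibria. It then suffices to cover the two regions named in the statement, $\{uv>1,\ R_1<0,\ H_3>0\}\cap\Omega$ and $\{uv>1,\ R_1>0,\ H_3<0\}\cap\Omega$, by finitely many such cells and, in each, to pick a rational sample $(u_0,v_0,a_0)$, solve the cubic $T_1$, recover $y^*$ from $T_2$, and test the signs of $CD_1^J$ and $CD_3^J$ at the real zeros: in the first region $T_1$ has a single real zero (Theorem~\ref{thm:equi-num}) which is found to be stable, and in the second region $T_1$ has three real zeros exactly one of which is stable. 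Collecting the per-cell counts then yields the asserted sufficient conditions.

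The step I expect to be the real obstacle is the semialgebraic geometry in three variables: determining how many connected components the regions $\{uv>1,R_1<0,H_3>0\}$ and $\{uv>1,R_1>0,H_3<0\}$ actually have inside $\Omega$, and placing one sample point in each. This is essentially a cylindrical algebraic decomposition computation; as the paper already remarks for the two-stable region, turning such a region into an explicit list of polynomial inequalities can be prohibitively expensive, although for the one-stable regions here it remains feasible. Once the decomposition is available, the remaining per-cell work — solving a parametrised cubic and evaluating two polynomial sign conditions — is routine, and the whole computation amounts to reading Figure~\ref{fig:3d-2stable} one $a$-level at a time.
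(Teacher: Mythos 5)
Your proposal follows essentially the same route as the paper: specialize the resultants $\res(CD_i^J,\pset{T})$ to the slice $b=a$, invoke Remark~\ref{rem:idea} so that the count of stable positive equilibria is constant on each region cut out by $uv=1$, $R_1=0$, $H_3=0$, and settle each region by sample points — which is exactly the computation the paper carries out (and likewise leaves largely unprinted) for Theorems~\ref{thm:stable-equiv} and~\ref{thm:ab-one}. Your extra observation that $CD_2^J\ge CD_1^J$ when $b=a\le 1$, so the $R_2$-variety can be dropped, is a clean justification of why only $uv-1$, $R_1$, $H_3$ appear in the final conditions, and is consistent with the paper's result.
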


In the general case of $a\neq b$, the computations for searching the algebraic description become particularly complicated. Furthermore, the geometric description can not be plotted either since more than 3 parameters are involved. We leave the exploration of this case for our future study.

%\section{Numerical Simulations}

\section{Concluding Remarks}

This study filled the gap that there are nearly no analytical investigations on the equilibria and their stability in the asymmetric duopoly model of Kopel. We employed several tools based on symbolic computations such as the triangular decomposition and resultant in our investigation. The results produced by symbolic computations are exact and rigorous, and thus can be used in proving theorems related to polynomial equations and algebraic varieties. We derived the possible positions of the equilibria in Kopel's model. Specifically, all the equilibria should lie in $[0,1]\times [0,1]$ provided that $uv\geq 1$ (see Proposition \ref{prop:all-equi}). We explored the possibility of the existence of multiple positive equilibria and established a necessary and sufficient condition for a given number of equilibria to exist (see Theorem \ref{thm:equi-num}). Furthermore, if the two duopolists adopt the best response reactions ($a=b=1$) or homogeneous adaptive expectations ($a=b$), we established rigorous conditions for distinct numbers of positive equilibria to exist for the first time (see Theorems \ref{thm:stable-equiv} and \ref{thm:ab-one}). In the general case of $a\neq b$, however, we fail to obtain the complete conditions that a given number of stable equilibria exist. We explained the essential difficulty is that the algebraic description of a region bounded by algebraic varieties is expensive to compute.

\section*{Acknowledgements}

The authors are grateful to the anonymous referees for their helpful comments. This work has been supported by Philosophy and Social Science Foundation of Guangdong under Grant GD21CLJ01, Major Research and Cultivation Project of Dongguan City University under Grant 2021YZDYB04Z and 2022YZD05R, Social Development Science and Technology Project of Dongguan under Grant 20211800900692.

\bibliographystyle{abbrv}
%\bibliography{ref.bib}
%\bibliography{../../reference/99-BIB-Warehouse/00-ref.bib}

\begin{thebibliography}{10}

\bibitem{agiza_analysis_1999}
H.~Agiza.
\newblock On the analysis of stability, bifurcation, chaos and chaos control of
  {Kopel} map.
\newblock {\em Chaos, Solitons \& Fractals}, 10(11):1909--1916, 1999.

\bibitem{bischi_equilibrium_2001}
G.~I. Bischi and M.~Kopel.
\newblock Equilibrium selection in a nonlinear duopoly game with adaptive
  expectations.
\newblock {\em Journal of Economic Behavior \& Organization}, 46(1):73--100,
  2001.

\bibitem{bischi_multistability_2000}
G.~I. Bischi, C.~Mammana, and L.~Gardini.
\newblock Multistability and cyclic attractors in duopoly games.
\newblock {\em Chaos, Solitons \& Fractals}, 11(4):543--564, 2000.

\bibitem{Canovas2018O}
J.~S. C{\'a}novas and M.~Mu{\~n}oz-Guillermo.
\newblock On the dynamics of {Kopel’s} {Cournot} duopoly model.
\newblock {\em Applied Mathematics and Computation}, 330:292--306, 2018.

\bibitem{elsadany_dynamical_2016}
A.~A. Elsadany and A.~M. Awad.
\newblock Dynamical analysis and chaos control in a heterogeneous {Kopel}
  duopoly game.
\newblock {\em Indian Journal of Pure and Applied Mathematics}, 47(4):617--639,
  2016.

\bibitem{guirao_extensions_2010}
J.~L.~G. Guirao and R.~G. Rubio.
\newblock Extensions of {Cournot} duopoly: an applied mathematical view.
\newblock {\em Applied Mathematics Letters}, 23:836--838, 2010.

\bibitem{Jury1976I}
E.~Jury, L.~Stark, and V.~Krishnan.
\newblock Inners and stability of dynamic systems.
\newblock {\em IEEE Transactions on Systems, Man, and Cybernetics},
  (10):724--725, 1976.

\bibitem{kopel_simple_1996}
M.~Kopel.
\newblock Simple and complex adjustment dynamics in {Cournot} duopoly models.
\newblock {\em Chaos, Solitons \& Fractals}, 7(12):2031--2048, 1996.

\bibitem{li_complex_2022}
B.~Li, H.~Liang, L.~Shi, and Q.~He.
\newblock Complex dynamics of {Kopel} model with nonsymmetric response between
  oligopolists.
\newblock {\em Chaos, Solitons \& Fractals}, 156:111860, 2022.

\bibitem{Li2010D}
X.~Li, C.~Mou, and D.~Wang.
\newblock Decomposing polynomial sets into simple sets over finite fields: the
  zero-dimensional case.
\newblock {\em Computers \& Mathematics with Applications}, 60(11):2983--2997,
  2010.

\bibitem{Li2014C}
X.~Li and D.~Wang.
\newblock Computing equilibria of semi-algebraic economies using triangular
  decomposition and real solution classification.
\newblock {\em Journal of Mathematical Economics}, 54:48--58, 2014.

\bibitem{Mishra1993A}
B.~Mishra.
\newblock {\em Algorithmic Algebra}.
\newblock Springer-Verlag, New York, 1993.

\bibitem{torcicollo_dynamics_2013}
I.~Torcicollo.
\newblock On the dynamics of a non-linear duopoly game model.
\newblock {\em International Journal of Non-Linear Mechanics}, 57:31--38, 2013.

\bibitem{wang_elimination_2001}
D.~Wang.
\newblock {\em Elimination {Methods}}.
\newblock Texts and {Monographs} in {Symbolic} {Computation}. Springer, New
  York, 2001.

\bibitem{wu_complex_2010}
W.~Wu, Z.~Chen, and W.~Ip.
\newblock Complex nonlinear dynamics and controlling chaos in a {Cournot}
  duopoly economic model.
\newblock {\em Nonlinear Analysis: Real World Applications}, 11(5):4363--4377,
  2010.

\bibitem{Yang2001A}
L.~Yang, X.~Hou, and B.~Xia.
\newblock A complete algorithm for automated discovering of a class of
  inequality-type theorems.
\newblock {\em Science in China Series F}, 44:33--49, 2001.

\bibitem{zhang_equilibrium_2019}
Y.~Zhang and X.~Gao.
\newblock Equilibrium selection of a homogenous duopoly with extrapolative
  foresight.
\newblock {\em Communications in Nonlinear Science and Numerical Simulation},
  67:366--374, 2019.

\end{thebibliography}

%\section*{Appendix}
%\tiny
%\begin{align*}
%
%\end{autobreak}\\
%
%\begin{autobreak}

%\end{align*}

%
%\newpage
%
%\section*{Highlights}
%
%\begin{enumerate}
%    \item The complete condition of the local stability for the modified version of Puu's monopoly model is obtained for the first time.
%    \item We investigate periodic solutions as well as their stability rigorously through symbolic computations rather than intuitively through numerical simulations.
%    \item The existence of chaos in the sense of Li-Yorke is proved by finding snapback repellers and 3-cycle orbits respectively for the two models.
%\end{enumerate}
%
%
%
%\section*{Cover Letter}
%\begin{lstlisting}[breaklines=true& columns=flexible]
%
%Dear editor,
%
%I would like to submit the enclosed manuscript entitled "***" by *** for possible publication in Communications in ***. 
%
%[abstract]
%
%Highlights of this work include the following.
%
%[highlights]
%
%The contact information of the corresponding author is as follows.
%
%   Name: ***;
%   Address: ***; 
%   E-mail: ***;
%   Mobile: ***;
%   Fax: N/A.
%
%Thank you very much for consideration!
%
%Sincerely yours,
%***
%
%\end{lstlisting}

\end{CJK}
\end{document}